\newcommand{\PRECOMPUTE}{\item[\textbf{Precomputed:}]}
\newtheorem{theorem}{Theorem}
\newtheorem{lemma}{Lemma}
\newtheorem{defn}{Definition}
\newtheorem{proposition}{Proposition}
\title{New Identity for Cayley's First Hyperdeterminant with Applications to Symmetric Tensors and Entanglement}
\author{Isaac Dobes}
\begin{document}

\maketitle

\begin{abstract}
    In this article, a new formula for computing Cayley's first hyperdeterminant in terms of the Levi-Civita symbol is given. It is then shown that this formula can be used to compute the hyperdeterminant of symmetric tensors in polynomial time with respect to their order (assuming fixed side length). Applications to quantifying the entanglement of states of bosonic quantum systems are then discussed. Additionally, in order to obtain the fast calculation of the hyperdeterminant on symmetric tensors, generalized elimination and duplication matrices are defined and their explicit formulas are derived. 
\end{abstract}

\section{Introduction}
Originally introduced in 1844 by Sir Arthur Cayley \cite{cayley1844theory}, Cayley's first hyperdeterminant (also known as the combinatorial hyperdeterminant) was nearly forgotten for the next 150 years. Indeed, only in the last 30 years has there been significant research interest into the applications of Cayley's first hyperdeterminant, and in this time numerous applications to both mathematics and physics have been uncovered! For example, in 1997 P. Zappa expressed Cayley's first hyperdeterminant in terms of the difference in even and odd Latin squares of order $N$ \cite{zappa1997cayley}, in 2003 J. G. Luque and J. Y. Thibon relate Cayley's first hyperdeterminant of Hankel hypermatrices to Selberg Integrals \cite{luque2003hankel}, in 2008 S. Matsumoto expresses a particular type of Jack function in terms of Cayley's first hyperdeterminant \cite{matsumoto2008hyperdeterminantal}, in 2021 P. Lammers utilizes Cayley's first hyperdeterminant to characterize a generalized Kasteleyn theory \cite{lammers2021generalisation}, also in 2021 A. Amanov and D. Yeliussizov used Cayley's first hyperdeterminant to establish lower bounds of certain tensor ranks \cite{amanov2023tensor}, and in 2024/2025 I. Dobes and N. Jing prove that Cayley's first hyperdeterminant generalizes the concurrence, an important entanglement measure in quantum information \cite{dobes2024qubits,dobes2025cayley}. Therefore, while one may initially consider Cayley's first hyperdeterminant to be a naive generalization of the usual determinant, it is in fact an important mathematical object with a wide variety of applications. 

In this paper, we derive a new formula for computing Cayley's first hyperdeterminant in terms of the Levi-Civita symbol. For an order $N$ tensor $\mathcal{A}\in \mathbb{K}^{d\times\dots d}$, \textbf{Cayley's first hyperdeterminant} (also known as the \textbf{combinatorial hyperdeterminant}) on $\mathcal{A}$ is typically defined as 
\[\mathrm{hdet}(\mathcal{A}) := \frac{1}{d!}\sum\limits_{\sigma_1,\dots,\sigma_N\in S_d}\left(\prod\limits_{k=1}^N\mathrm{sgn}(\sigma_k)\right)\prod\limits_{i=1}^da_{\sigma_1(i)\dots\sigma_N(i)}.\]
If $N$ is odd, then $\mathrm{hdet}(\mathcal{A})$ is identically $0$ \cite{lim2013tensors}, however if $N$ is even then $\mathrm{hdet}(\mathcal{A})$ is non-trivial in general, and furthermore we have that
\[\mathrm{hdet}(\mathcal{A}) = \sum\limits_{\sigma_2,\dots,\sigma_N\in S_d}\left(\prod\limits_{k=2}^N\mathrm{sgn}(\sigma_k)\right)\prod\limits_{i=1}^da_{i\sigma_2(i)\dots\sigma_N(i)}.\]
The $d$-dimensional Levi-Civita symbol $\varepsilon = [\varepsilon_{i_1i_2\dots i_d}]\in \mathbb{K}^{d\times\dots\times d}$, which represents an important antisymmetric tensor in geometry and physics (e.g. see \cite{gilmore2008lie}), is the order $d$ tensor such that 
\[\varepsilon_{i_1i_2\dots i_d} = 
    \begin{cases}
        \begin{rcases}
            1 & \text{if }(i_1i_2\dots i_d)\text{ is an even permutation of the tuple }(12\dots d) \\ 
            -1 & \text{if }(i_1i_2\dots i_d)\text{ is an odd permutation of the tuple }(12\dots d) \\
            0 & \text{otherwise}
        \end{rcases}
    \end{cases}\]
In this article, we show that if $\mathcal{A}\in \mathbb{K}^{d\times...\times d}$ is an arbitrary order $N$ tensor, then Cayley's first hyperdeterminant of $\mathcal{A}$ is equal to the multilinear matrix product
\[\left(\frac{1}{d!}\bigotimes\limits_{k=1}^N\varepsilon\right)*(\mathrm{hvec}(\mathcal{A}),\dots,\mathrm{hvec}(\mathcal{A})),\]
where $\varepsilon$ denotes the $d$-dimensional Levi-Civita symbol, "$\bigotimes$" denotes the tensor Kronecker product, "$*$" denotes multilinear matrix multiplication, and $\mathrm{hvec}(\mathcal{A})$ denotes a vectorization of the tensor $\mathcal{A}$. The major benefit to this new identity for computing Cayley's first hyperdeterminant applies to symmetric tensors. 

In general, the run time cost for computing Cayley's first hyperdeterminant (which we denote as $\mathrm{hdet}$) of an arbitrary cubical tensor of order $N$ with side length $d$ is VNP-hard \cite{hillar2013most}. Indeed, with current state-of-the-art methods the run time cost for computing $\mathrm{hdet}$ is exponential in both $N$ and $d$; even after fixing one of the variables, it is still exponential with respect to the other \cite{amanov2024enhanced}. However, after generalizing the notions of half-vectorization and duplication matrices to apply to symmetric hypermatrices of order $N$ with side length $d$, it is then shown in this article that the run time cost for computing $\mathrm{hdet}$ on symmetric hypermatrices is in fact polynomial with respect to $N$, assuming $d$ fixed. We then discuss how this result can be used to efficiently calculate the $2n$-way entanglement of states of bosonic quantum systems. Lastly in the appendix, we derive an explicit formula for the generalized duplication matrix, which the fast calculation of $\mathrm{hdet}$ on symmetric hypermatrices depends on. 

\section{Hypermatrix Algebra}
\subsection{Preliminaries}
In this paper we will use $\mathbb{K}$ to denote an arbitrary field of characteristic $0$, and unless otherwise specified we will always assume that the base field is $\mathbb{K}$. The material in this and the next section holds for any field $F$ of any characteristic, however from Section 4 onward we will need to assume the characteristic is $0$, so for notational consistency and convenience we just go ahead and write everything in terms of $\mathbb{K}$. Also, we utilize the convention in combinatorics that for a positive integer $n$, $[n]$ denotes the set $\{1,2,\dots,n\}$. 

A \textbf{hypermatrix} of order $N$ with dimensions $n_1\times n_2\times\dots\times n_N$ is an $N$-dimensional array; specifically, an element in $\mathbb{K}^{n_1\times n_2\times\dots\times n_N}$. Since every higher-dimensional array may be viewed as the coordinate array of some tensor in a fixed basis, we identify these arrays with their corresponding tensors and adopt the more common convention of referring to the arrays themselves as tensors. A tensor $A \in \mathbb{K}^{n_1\times n_2\times\dots\times n_N}$ is called \textbf{cubical} if $n_1=n_2=\dots=n_N =: d$, and in such case we call the number $d$ the \textbf{side length} of the tensor. An order $N$ cubical tensor $\mathcal{A} = [a_{i_1\dots i_N}]\in \mathbb{K}^{d\times\dots\times d}$ is called \textbf{symmetric} if $a_{i_1\dots i_N} = a_{i_{\sigma(1)}\dots i_{\sigma(N)}}$ for every $\sigma\in S_N$.
\begin{defn}[Segre Outer Product]
    Suppose $\mathcal{A} = [a_{i_1\dots i_N}]\in \mathbb{K}^{n_1\times...\times n_N}$ and $\mathcal{B} = [b_{j_1\dots j_M}]\in \mathbb{K}^{m_1\times\dots\times m_M}$ are tensors of order $N$ and $M$ respectively. The \textbf{Segre outer product} of $\mathcal{A}$ with $\mathcal{B}$, denoted $\mathcal{A}\circ \mathcal{B}$, is the order $N+M$ tensor whose $(i_1,\dots,i_N,j_1,\dots,j_M)$-coordinate is given by $a_{i_1\dots i_N}b_{j_1\dots j_M}$.
\end{defn}

\begin{defn}[Multilinear Matrix Multiplication]\cite{lim2013tensors}
    Suppose $\mathcal{A} = [a_{i_1\dotsi_N}] \in \mathbb{K}^{n_1\times\dots\times n_N}$, and $X^{(k)} = [x^{(k)}_{ij}]\in \mathbb{K}^{n_k\times m_k}$ for $k=1,...,N$. The \textbf{(right) multilinear matrix multiplication} of $\mathcal{A}$ with the tuple $(X^{(1)},\dots,X^{(N)})$ is given by the tensor $A*(X^{(1)},\dots,X^{(N)}) =: \mathcal{A}'$ in $\mathbb{K}^{m_1\times\dots\times m_N}$ whose $(j_1,\dots,j_N)$-coordinate is given by 
    \[a_{j_1\dots j_N}' \ = \sum\limits_{k_1,\dots,k_N=1}^{n_1,\dots,n_N}a_{k_1\dots k_N}x_{k_1j_1}^{(1)}\dots x_{k_Nj_N}^{(N)}\]
\end{defn}
Note also that it is common in the literature for multilinear matrix multiplication to be defined as a left action, however we define it as a right action since this convention leads to slightly cleaner formulas (atleast in this article) and makes our subsequent generalizations of known objects (namely, elimination and duplication matrices) more faithful to their classical counterparts. 
\begin{proposition}[\cite{lim2013tensors}]
    Let $\alpha,\beta\in \mathbb{K}$; $X^{(1)},Y^{(1)}\in \mathbb{K}^{m_1\times n_1}$,\dots m $X^{(N)},Y^{(N)}\in \mathbb{K}^{m_N\times n_N}$; and $\mathcal{A},\mathcal{B}\in \mathbb{K}^{n_1\times\dots\times n_N}$. Multilinear matrix multiplication satisfies the following properties. 
    \begin{enumerate}
        \item Multilinearity:  
        \[(\alpha \mathcal{A} + \beta \mathcal{B})*(X^{(1)},\dots,X^{(N)}) = \alpha \mathcal{A}*(X^{(1)},\dots,X^{(N)}) + \beta \mathcal{B}*(X^{(1)},\dots,X^{(N)})\]
        and 
        \[\mathcal{A}*[\alpha(X^{(1)},\dots,X^{(N)})+\beta(Y^{(1)},\dots,Y^{(N)})] = \alpha \mathcal{A}*(X^{(1)},\dots,X^{(N)}) + \beta \mathcal{A}*(Y^{(1)},\dots,Y^{(N)}).\]
        \item Multiplicative: if $Z^{(1)}\in \mathbb{K}^{n_1\times p_1}$,$\dots$, $Z^{(N)}\in \mathbb{K}^{n_N\times p_N}$, then 
        \[\mathcal{A}*(X^{(1)}Z^{(1)},\dots,X^{(N)}Z^{(N)}) = [\mathcal{A}*(X^{(1)},\dots,X^{(N)})]*(Z^{(1)},\dots,Z^{(N)}).\]
    \end{enumerate}
\end{proposition}
\begin{defn}[Tensor Kronecker Product]
    The \textbf{tensor Kronecker product} of two order $N$ hypermatrices $\mathcal{A} = [a_{i_1\dotsi_N}]\in \mathbb{K}^{m_1\times\dots\times m_N}$ and $\mathcal{B} = [b_{i_1\dotsi_N}]\in \mathbb{K}^{n_1\times\dots\times n_N}$ is defined as the order $N$ block tensor $\mathcal{A}\otimes \mathcal{B}\in \mathbb{K}^{m_1n_1\times\dots\times m_Nn_N}$ whose $(i_1,\dots,i_N)$-block is the tensor $a_{i_1\dots i_N}\mathcal{B}$. 
\end{defn}
The tensor Kronecker product is compatible with multilinear matrix multiplication in the following way. 
\begin{proposition}\cite{pickard2024kronecker}\label{TKP Property}
    Let $\mathcal{A} \in \mathbb{K}^{m_1\times\dots\times m_N}$ and $\mathcal{B}\in \mathbb{K}^{n_1\times\dots\times n_N}$ be two hypermatrices of order $N$, and suppose $X^{(i)}\in \mathbb{K}^{m_i\times p_i}$ and $Y^{(i)}\in \mathbb{K}^{n_i\times q_i}$ for each $i\in [N]$. Then 
    \[(\mathcal{A}*(X^{(1)},\dots,X^{(N)}))\otimes (\mathcal{B}*(Y^{(1)},\dots,Y^{(N)})) = (\mathcal{A}\otimes \mathcal{B})*(X^{(1)}\otimes Y^{(1)},\dots,X^{(N)}\otimes Y^{(N)}).\]
\end{proposition}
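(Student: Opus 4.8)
The plan is to prove the identity coordinate by coordinate, which reduces the statement to the familiar mixed-product property of ordinary matrix Kronecker products. Both sides are hypermatrices of order $N$ whose mode-$t$ dimension is $p_tq_t$. First I would record the index conventions forced by the two kinds of Kronecker product. From Definition 2, the mode-$t$ index set of $\mathcal{A}\otimes\mathcal{B}$ is $[m_tn_t]$, and under the bijection $\kappa_t\leftrightarrow(a_t,b_t)$ given by $\kappa_t=(a_t-1)n_t+b_t$ with $a_t\in[m_t]$, $b_t\in[n_t]$, the $(\kappa_1,\dots,\kappa_N)$-entry of $\mathcal{A}\otimes\mathcal{B}$ is $a_{a_1\dots a_N}\,b_{b_1\dots b_N}$ (the ``block'' coordinate selects the entry of $\mathcal{A}$, the ``within-block'' coordinate the entry of $\mathcal{B}$). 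The same splitting applied to the rows and columns of the matrix Kronecker product $X^{(t)}\otimes Y^{(t)}$ gives $(X^{(t)}\otimes Y^{(t)})_{\kappa_t,\lambda_t}=x^{(t)}_{a_tc_t}\,y^{(t)}_{b_td_t}$ whenever $\kappa_t=(a_t-1)n_t+b_t$ and $\lambda_t=(c_t-1)q_t+d_t$, and likewise for the Kronecker product of the two output hypermatrices on the left-hand side.

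Next I would compute the $(\lambda_1,\dots,\lambda_N)$-entry of the right-hand side directly from the definition of multilinear matrix multiplication. Its contracted sum over mode $t$ runs over $[m_tn_t]$; rewriting it as a sum over the pair $(a_t,b_t)$ and substituting the entry formulas just recorded, the summand becomes $a_{a_1\dots a_N}\,b_{b_1\dots b_N}\prod_{t=1}^N x^{(t)}_{a_tc_t}y^{(t)}_{b_td_t}$. Since the factors built from $\mathcal{A}$ and the $X^{(t)}$ involve only the $a_t$'s while those built from $\mathcal{B}$ and the $Y^{(t)}$ involve only the $b_t$'s, the multiple sum splits as a product
\[
\Big(\sum_{a_1,\dots,a_N}a_{a_1\dots a_N}\prod_{t=1}^N x^{(t)}_{a_tc_t}\Big)\Big(\sum_{b_1,\dots,b_N}b_{b_1\dots b_N}\prod_{t=1}^N y^{(t)}_{b_td_t}\Big),
\]
which is exactly the $(c_1,\dots,c_N)$-entry of $\mathcal{A}*(X^{(1)},\dots,X^{(N)})$ times the $(d_1,\dots,d_N)$-entry of $\mathcal{B}*(Y^{(1)},\dots,Y^{(N)})$. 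By the entry formula for the tensor Kronecker product, this product is precisely the $(\lambda_1,\dots,\lambda_N)$-entry of the left-hand side, so the two hypermatrices coincide.

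The only genuinely delicate point is the bookkeeping: one must check that the bijection $[m_tn_t]\leftrightarrow[m_t]\times[n_t]$ used to split the contracted indices on the right-hand side is the very same ``outer index first'' convention that underlies both the block structure of $\mathcal{A}\otimes\mathcal{B}$ and the row/column structure of each $X^{(t)}\otimes Y^{(t)}$; once these conventions are pinned down consistently, the factorization of the sum is immediate and the rest is routine. I would also mention, but not carry out, the index-free alternative: expand $\mathcal{A}$ and $\mathcal{B}$ as sums of rank-one hypermatrices (outer products of vectors), use the multilinearity of $*$ (Proposition 1) together with the bilinearity of $\otimes$ to reduce to the case of rank-one factors, and then apply the elementary identity $(X^{(t)}v_t)\otimes(Y^{(t)}w_t)=(X^{(t)}\otimes Y^{(t)})(v_t\otimes w_t)$ mode by mode. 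This route is conceptually cleaner but requires the rank-one decomposition to be set up first, so I would keep the entrywise computation as the main argument.
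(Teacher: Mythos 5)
Your argument is correct: the paper itself gives no proof of this proposition (it is quoted from the cited reference), and your entrywise computation is the standard and natural way to establish it. The index bookkeeping is handled properly — the splitting $\kappa_t=(a_t-1)n_t+b_t$, $\lambda_t=(c_t-1)q_t+d_t$ is exactly the convention implied by the block definition of the tensor Kronecker product and by the matrix Kronecker products $X^{(t)}\otimes Y^{(t)}$ — and once the contracted sum factors into the $\mathcal{A}$-part and the $\mathcal{B}$-part, the identity follows. Nothing further is needed.
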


\subsection{Vectorizing Tensors}
Note that if $e_{i_k}\in \mathbb{K}^{n_k}$ for $k=1,\dots N$ are standard basis vectors, then $e_{i_1}\circ\dots\circ e_{i_N}\in \mathbb{K}^{n_1\times\dots\times n_N}$ is the order $N$ tensor with a $1$ in its $(i_1,\dots,i_N)$-entry and $0$'s elsewhere, hence very tensor $\mathcal{A}\in \mathbb{K}^{n_1\times...\times n_N}$ may be uniquely written as 
\[\mathcal{A} = \sum\limits_{i_1,\dots i_N=1}^{n_1,\dots n_N}a_{i_1\dots i_N}e_{i_1}\circ\dots e_{i_N}.\]
Moreover, $e_{i_1}\otimes...\otimes e_{i_{i_N}}\in \mathbb{K}^{n_1\cdot\dots\cdot n_N}$ is the $\left(i_N+n_N(i_{N-1}-1)+\dots +n_Nn_{N-1}\cdot\dots\cdot n_1(i_1-1)\right)^{th}$ element in the standard ordered basis of $\mathbb{K}^{n_1\cdot\dots\cdot n_N}$. Therefore, extending linearly the map on basis elements
\begin{align*}
    \mathrm{hvec}:\mathbb{K}^{n_1\times\dots\times n_N}&\longrightarrow \mathbb{K}^{n_1\cdot\dots\cdot n_N} \\
    e_{i_1}\circ...\circ e_{i_N}&\mapsto e_{i_1}\otimes...\otimes e_{i_N}
\end{align*}
is a linear isomorphism; we denote this map as $\mathrm{hvec}$ (for \textbf{hypermatrix vectorization}).\footnote{Note that in the special case of $N=2$, in which case $\mathcal{A}$ may be thought of as a matrix, $\mathrm{hvec}(\mathcal{A}) = \mathrm{vec}(\mathcal{A}^T)$, where $\mathrm{vec}$ is the typical vectorization map defined on matrices.} 

Next, note that if $\mathcal{A}\in \mathbb{K}^{d\times \dots \times d}$ is symmetric, there is a natural equivalence relation on the coordinates of $\mathcal{A}$: $a_{i_1\dots i_N}\sim a_{j_1\dots j_N} \iff j_k = i_{\sigma(k)}$ for each $k\in [N]$, for some $\sigma\in S_N$. Hence, we may recover all of the coordinates of $\mathcal{A}$ by considering only the equivalence class representatives $a_{i_1\dots i_N}$ such that $1\leq i_1\leq\dots\leq i_N\leq d$. This motivates the following definition.
\begin{defn}
    Let $\mathcal{A}\in \mathbb{K}^{d\times\dots d}$ be a symmetric order $N$ tensor. The \textbf{$\boldsymbol{1/N}$-hypermatrix vectorization} of $\mathcal{A}$, denoted as $\mathrm{hvec}_{1/N}(\mathcal{A})$, is defined to be the vector consisting of the equivalence class representatives mentioned above, ordered in the lexicographical order from least to greatest. In other words, 
    \[\mathrm{hvec}_{1/N}(\mathcal{A}) := \left[\begin{array}{c}
        a_{11\dots 11} \\
        a_{11\dots 12} \\
        \vdots \\
        a_{11\dots 1d} \\
        a_{11\dots 22} \\
        \vdots \\
        a_{11\dots 2d} \\
        a_{11\dots 33} \\
        \vdots \\
        \vdots \\
        a_{d-1,d\dots dd} \\
        a_{dd\dots dd}
    \end{array}\right]\]
\end{defn}
Since $\dim(\mathrm{Sym}^d(\mathbb{K}^d)) = \binom{d+N-1}{N}$, it follows that $\mathrm{hvec}_{1/N}(\mathcal{A})\in \mathbb{K}^{\binom{d+N-1}{N}}$. 

\section{Main Result}
\subsection{Cayley's First Hyperdeterminant in Terms of the Levi-Civita Symbol}
First we note the following observation. 
\begin{lemma}\label{Hyperdeterminant in Terms of LC}
    Let $\mathcal{A} = [a_{i_1\dots i_N}]\in \mathbb{K}^{d\times\dots \times d}$ be a cubical tensor of order $N$ with side length $d$. Then 
    \begin{equation}\label{Hyperdeterminant as Big Sum w/LC}
        \mathrm{hdet}(\mathcal{A}) = \frac{1}{d!}\sum\limits_{\substack{i_k^{(1)},\dots,i_k^{(d)}=1 \\ k\in [N]}}^d\prod\limits_{k=1}^N\varepsilon_{i_k^{(1)},\dots,i_k^{(d)}}\prod\limits_{l=1}^da_{i_1^{(l)},\dots,i_N^{(l)}}.
    \end{equation}
\end{lemma}
\begin{proof}
    The product $\prod\limits_{k=1}^N\varepsilon_{i_1^{(k)},\dots,i_d^{(k)}}$ is nonzero only if for each $k\in [N]$, $(i_1^{(k)},\dots,i_d^{(k)})$ is a permutation of $S_d$. Therefore, assuming $(i_1^{(k)},\dots,i_d^{(k)})$ is in fact a permutation of $S_d$, and denoting said permutation as $\sigma_k$, it then follows that the sum in equation \eqref{Hyperdeterminant as Big Sum w/LC} can be rewritten as 
    \[\frac{1}{d!}\sum\limits_{\substack{\sigma_k\in S_d \\ k\in [N]}}\left(\prod\limits_{k=1}^N\varepsilon_{\sigma_k(1)\dots\sigma_k(d)}\right)\prod\limits_{l=1}^da_{\sigma_1(l)\dots\sigma_N(l)}.\]
    By definition $\varepsilon_{\sigma_k(1)\dots\sigma_k(d)} = \mathrm{sgn}(\sigma_k)$, hence the sum in equation \eqref{Hyperdeterminant as Big Sum w/LC} is in fact equal to
    \[\frac{1}{d!}\sum\limits_{\substack{\sigma_k\in S_d \\ k\in [N]}}\left(\prod\limits_{k=1}^N\mathrm{sgn}(\sigma_k)\right)\prod\limits_{l=1}^da_{\sigma_1(l)\dots\sigma_N(l)},\]
    which is precisely the formula for $\mathrm{hdet}(\mathcal{A})$.  
\end{proof}
With the above lemma, we may now prove the main result of this article. 
\begin{theorem}\label{Main Result}
    Let $\mathcal{A} = [a_{i_1\dots i_N}]\in \mathbb{K}^{d\times\dots\times d}$ be a cubical tensor of order $N$ with side length $d$. Then 
    \[\mathrm{hdet}(\mathcal{A}) = \left(\frac{1}{d!}\bigotimes\limits_{k=1}^N\varepsilon\right)*\big(\mathrm{hvec}(\mathcal{A}),\dots,\mathrm{hvec}(\mathcal{A})\big),\]
    where $\varepsilon$ denotes the $d$-dimensional Levi-Civita symbol, $\bigotimes$ denotes the tensor Kronecker product, and the tuple \\
    $\big(\mathrm{hvec}(\mathcal{A}),\dots,\mathrm{hvec}(\mathcal{A})\big)$ has length $d$. 
\end{theorem}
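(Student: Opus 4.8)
The plan is to rewrite the right-hand side so that, term by term, it becomes the sum appearing in equation~\eqref{Hyperdeterminant in terms of Levi-Civita}; the claim then follows at once from the Lemma. Two facts do the work: that $\mathrm{hvec}(\mathcal{A})$ is a linear combination of \emph{pure} Kronecker products of standard basis column vectors, and that the tensor Kronecker product is compatible with multilinear matrix multiplication (Proposition~2, \cite{pickard2024kronecker}), applied iteratively.

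First I would use Definition~\ref{Hypermatrix Vectorization} to write
\[\mathrm{hvec}(\mathcal{A})=\sum_{i_1,\dots,i_N=1}^{d}a_{i_1\dots i_N}\;e_{i_N}\otimes e_{i_{N-1}}\otimes\cdots\otimes e_{i_1},\]
where each $e_i$ is a standard basis vector of $F^d$ regarded as a $d\times1$ matrix. Substituting this expression into each of the $d$ slots of $\bigotimes_{k=1}^{N}\epsilon*(\mathrm{hvec}(\mathcal{A}),\dots,\mathrm{hvec}(\mathcal{A}))$ and using the multilinearity of multilinear matrix multiplication (Proposition~1) to pull the scalars out of every slot, the right-hand side becomes
\[\frac{1}{d!}\sum_{\mathbf i^{(1)},\dots,\mathbf i^{(d)}\in[d]^{N}}\Big(\prod_{l=1}^{d}a_{i_1^{(l)}\dots i_N^{(l)}}\Big)\Big(\textstyle\bigotimes_{k=1}^{N}\epsilon\Big)*\big(w^{(1)},\dots,w^{(d)}\big),\]
where $\mathbf i^{(l)}=(i_1^{(l)},\dots,i_N^{(l)})$ and $w^{(l)}=e_{i_N^{(l)}}\otimes e_{i_{N-1}^{(l)}}\otimes\cdots\otimes e_{i_1^{(l)}}$.

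Next I would evaluate the hypermatrix factor $\big(\textstyle\bigotimes_{k=1}^{N}\epsilon\big)*(w^{(1)},\dots,w^{(d)})$. Writing $w^{(l)}=x_1^{(l)}\otimes\cdots\otimes x_N^{(l)}$ with $x_k^{(l)}:=e_{i_{N+1-k}^{(l)}}$ (so that the $k$-th tensor factor of every argument is aligned with the $k$-th copy of $\epsilon$, after accounting for the order-reversal built into $\mathrm{hvec}$), and applying Proposition~2 a total of $N-1$ times (read from right to left, with both hypermatrices taken equal to $\epsilon$, whose order is $d$) distributes the multilinear product across the Kronecker product:
\[\Big(\textstyle\bigotimes_{k=1}^{N}\epsilon\Big)*\big(w^{(1)},\dots,w^{(d)}\big)=\bigotimes_{k=1}^{N}\Big(\epsilon*\big(e_{i_{N+1-k}^{(1)}},\dots,e_{i_{N+1-k}^{(d)}}\big)\Big).\]
Each inner factor is a scalar: straight from the definition of multilinear matrix multiplication, $\epsilon*(e_{j^{(1)}},\dots,e_{j^{(d)}})=\sum_{k_1,\dots,k_d=1}^{d}\epsilon_{k_1\dots k_d}(e_{j^{(1)}})_{k_1}\cdots(e_{j^{(d)}})_{k_d}=\epsilon_{j^{(1)}\cdots j^{(d)}}$, and a Kronecker product of scalars is their ordinary product. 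Reindexing $k\mapsto N+1-k$, the hypermatrix factor equals $\prod_{k=1}^{N}\epsilon_{i_k^{(1)}\cdots i_k^{(d)}}$.

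Assembling the pieces,
\[\frac{1}{d!}\bigotimes_{k=1}^{N}\epsilon*\big(\mathrm{hvec}(\mathcal{A}),\dots,\mathrm{hvec}(\mathcal{A})\big)=\frac{1}{d!}\sum_{\substack{i_k^{(l)}\in[d]\\ k\in[N],\,l\in[d]}}\prod_{k=1}^{N}\epsilon_{i_k^{(1)}\cdots i_k^{(d)}}\prod_{l=1}^{d}a_{i_1^{(l)}\cdots i_N^{(l)}},\]
which is precisely the right-hand side of equation~\eqref{Hyperdeterminant in terms of Levi-Civita}, and the Lemma identifies it with $\mathrm{hdet}(\mathcal{A})$. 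I expect the only genuine difficulty to be index bookkeeping rather than ideas: one must keep straight the order-reversal between the factorization $e_{i_N}\otimes\cdots\otimes e_{i_1}$ in $\mathrm{hvec}$ and the left-to-right order of the $N$ copies of $\epsilon$, and one must check that the intermediate sizes match when iterating the two-factor Proposition~2 (the partial Kronecker powers of $\epsilon$ lie in $F^{d^{m}\times\cdots\times d^{m}}$ and act on vectors in $F^{d^{m}\times1}$). A purely coordinate-based alternative is also available: decode each index in $[d^{N}]$ via the map $\psi$ from the Fact stated just before Definition~\ref{Hypermatrix Vectorization}, identify the corresponding entry of $\mathrm{hvec}(\mathcal{A})$ with a single entry of $\mathcal{A}$ and the corresponding entry of $\bigotimes_{k}\epsilon$ with $\prod_{k}\epsilon_{\cdots}$, and then read equation~\eqref{Hyperdeterminant in terms of Levi-Civita} directly off the definition of the multilinear product.
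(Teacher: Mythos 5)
Your proof is correct and follows essentially the same route as the paper's: expand $\mathrm{hvec}(\mathcal{A})$ into Kronecker products of standard basis vectors, pull the coefficients out by multilinearity, distribute the multilinear product across $\bigotimes_{k=1}^N\epsilon$ via Proposition 2, evaluate each factor $\epsilon*(e_{j^{(1)}},\dots,e_{j^{(d)}})=\epsilon_{j^{(1)}\dots j^{(d)}}$, and finish with Lemma 1. The only (harmless) difference is bookkeeping: the paper works with the $\pi_T$-transposed form of $\mathrm{hvec}$, lands on $\mathrm{hdet}(\mathcal{A}^{\pi_T})$, and then invokes the permutation-invariance of $\mathrm{hdet}$, whereas you keep the reversed factor order $e_{i_N}\otimes\cdots\otimes e_{i_1}$ and reindex the product of Levi-Civita factors ($k\mapsto N+1-k$), arriving at $\mathrm{hdet}(\mathcal{A})$ directly without needing that invariance.
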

\begin{proof}
    Observe
    \begin{align*}
        &\left(\frac{1}{d!}\bigotimes\limits_{k=1}^N\varepsilon\right)*\big(\mathrm{hvec}(\mathcal{A}),\dots,\mathrm{hvec}(\mathcal{A})\big) \\
        &= \left(\frac{1}{d!}\bigotimes\limits_{k=1}^N\varepsilon\right)*\left(\sum\limits_{i_1^{(1)},\dots,i_N^{(1)}=1}^da_{i_1^{(1)}\dots i_N^{(1)}}e_{i_1^{(1)}}\otimes\dots\otimes e_{i_N^{(1)}},\dots,\sum\limits_{i_1^{(d)},\dots,i_N^{(d)}=1}^da_{i_1^{(d)}\dots i_N^{(d)}}e_{i_1^{(d)}}\otimes\dots\otimes e_{i_N^{(d)}}\right),\quad \text{by definition} \\
        &= \frac{1}{d!}\sum\limits_{\substack{i_1^{(l)},\dots,i_N^{(l)}=1 \\ l\in [d]}}^d\prod\limits_{l=1}^da_{i_1^{(l)}\dots i_N^{(l)}}\left(\bigotimes_{k=1}^N\varepsilon*(e_{i_1^{(1)}}\otimes\dots\otimes e_{i_N^{(1)}},\dots,e_{i_1^{(d)}}\otimes\dots\otimes e_{i_N^{(d)}})\right),\quad \text{by multilinearity} \\
        &= \frac{1}{d!}\sum\limits_{\substack{i_1^{(l)},\dots,i_N^{(l)}=1 \\ l\in [d]}}^d\prod\limits_{l=1}^da_{i_1^{(l)}\dots i_N^{(l)}}\big(\varepsilon*(e_{i_1^{(1)}},\dots,e_{i_1^{(d)}})\big)\otimes\dots\otimes \big(\varepsilon*(e_{i_N^{(1)}},\dots,e_{i_N^{(d)}})\big),\quad \text{by \cref{TKP Property}} \\
        &= \frac{1}{d!}\sum\limits_{\substack{i_1^{(l)},\dots,i_N^{(l)}=1 \\ l\in [d]}}^d\prod\limits_{l=1}^da_{i_1^{(l)}\dots i_N^{(l)}}\varepsilon_{i_1^{(1)}\dots i_1^{(d)}}\dots\varepsilon_{i_N^{(1)}\dots i_N^{(d)}} \\
        &= \frac{1}{d!}\sum\limits_{\substack{i_k^{(1)},\dots,i_k^{(d)}=1 \\ k\in [N]}}^d\prod\limits_{k=1}^N\varepsilon_{i_k^{(1)}\dots i_k^{(d)}}\prod\limits_{l=1}^da_{i_1^{(l)}\dots i_N^{(l)}} \\
        &= \mathrm{hdet}(\mathcal{A}),\quad \text{by \cref{Hyperdeterminant in Terms of LC}.}
    \end{align*}
\end{proof}
Note in particular that when $N$ is odd, the anti-symmetry of $\varepsilon$ implies that the product in \cref{Main Result} is identically $0$, in agreement with the fact that Cayley's first hyperdeterminant is identically $0$ on all odd order hypermatrices. 

\subsection{Run Time Cost of Cayley's First Hyperdeterminant via Levi-Civita Symbol}
Let $\varepsilon$ denote the $d$-dimensional Levi-Civita symbol, which recall is a cubical tensor of order $d$ with side length $d$. It then follows that the run time cost for computing the tensor Kronecker product $\bigotimes\limits_{k=1}^N\varepsilon$ is $\mathcal{O}(d^{Nd})$. Suppose also that $\mathcal{A}$ is a cubical tensor of order $N$ with side length $d$. Then $\mathrm{hvec}(\mathcal{A})\in \mathbb{K}^{d^N}$ and so the run time cost for preparing $\mathrm{hvec}(\mathcal{A})$ is $\mathcal{O}\big(d^N\big)$. By \cref{Main Result}, 
\begin{equation}\label{LC hyperdeterminant computation}
    \mathrm{hdet}(\mathcal{A}) = \left(\frac{1}{d!}\bigotimes_{k=1}^N \varepsilon\right) * \big(\mathrm{hvec}(\mathcal{A}),\dots, \mathrm{hvec}(\mathcal{A})\big),
\end{equation}
which yields a run time cost of $\mathcal{O}(d^{2Nd})$ due to the evaluation of the multilinear matrix product of $\bigotimes\limits_{k=1}^N\varepsilon\in \mathbb{K}^{d^N\times\dots\times d^N}$ with $d$ copies of $\mathrm{hvec}(\mathcal{A})\in \mathbb{K}^{d^N}$ (note that asymptotically the initial cost of preparing $\mathrm{hvec}(\mathcal{A})$ is marginal in comparison so this may be omitted in the cost calculation). This approach is slower than the current state-of-the-art algorithms for computing Cayley’s first hyperdeterminant, which is currently $\mathcal{O}\big(2^{d(N-1)}d^{N-1}\big)$ according to \cite{amanov2024enhanced}. 

Nonetheless, this method has one key advantage: the tensor Kronecker product $\bigotimes\limits_{k=1}^N \varepsilon$ depends only on $d$ and $N$, not on the specific tensor $\mathcal{A}$. Consequently, it may be computed once and reused for all other subsequent hyperdeterminant calculations. Assuming that $\bigotimes\limits_{k=1}^N\varepsilon$ is precomputed and stored, the run time cost of computing $\mathrm{hdet}(\mathcal{A})$ via identity \eqref{LC hyperdeterminant computation} from \cref{Main Result} reduces to $\mathcal{O}(d^{Nd})$. 

We now compare this with the time complexity of the current state of the art method. Consider the ratio of the two run times: 
\[\frac{d^{Nd}}{2^{d(N-1)}d^{N-1}} = d^{Nd-(N-1)}2^{-d(N-1)}\]
Taking logarithms, we obtain
\begin{equation}\label{log comparison}
\ln\left(d^{Nd - (N - 1)} \cdot 2^{-d(N - 1)}\right) = \Big(\big(Nd - (N-1)\big)\log_2(d) - (Nd - d)\Big)\ln(2).
\end{equation}
In general, \eqref{log comparison} will be positive except for when $d=2$ and $N\geq 3$. Consequently, computing $\mathrm{hdet}(\mathcal{A})$ via identity \eqref{LC hyperdeterminant computation} is in general slower than current state of the art methods, except for in the special case where $d=2$ and $N\geq 3$. However, when it comes to \textit{symmetric} hypermatrices, the benefit to computing Cayley's first hyperdeterminant via identity \eqref{LC hyperdeterminant computation} yields a far more substantial advantage, which we explain in the next section. 

\section{Fast Calculation for $\mathrm{hdet}$ on Symmetric Hypermatrices}
\subsection{Generalized Elimination and Duplication Matrices}
In \cite{magnus1980elimination}, Magnus and Neudecker introduce the notion of elimination and duplication matrices, which they define as follows: if $A$ is any $d\times d$ symmetric matrix, then the \textbf{elimination matrix}, denoted $L_d$, is the unique $\frac{d(d+1)}{2}\times d^2$ matrix such that 
\[L_d\mathrm{vech}(A) = \mathrm{vec}(A);\]
on the other hand, the duplication matrix, denoted $D_d$, is the unique $d^2\times \frac{d(d+1)}{2}$ matrix such that 
\[D_d\mathrm{vech}(A) = \mathrm{vec}(A).\]
It was shown in \cite{magnus1980elimination} that for each positive integer $d\geq 2$, 
\[L_d = \sum\limits_{d\geq i\geq j\geq 1}u_{ij}\mathrm{vec}(E_{ij})^T\]
and 
\[D_d = \sum\limits_{d\geq i\geq j\geq 1}\mathrm{vec}(T_{ij})u_{ij}^T,\]
where $u_{ij}$ is the $\frac{1}{2}d(d+1)$-dimensional unit vector with a $1$ in its $\big((j-1)n+i-\frac{1}{2}j(j-1)\big)^{th}$-coordinate and $0$'s everywhere else, and $T_{ij}$ is the $d\times d$ matrix with a $1$ in its $(i,j)$ and $(j,i)$ coordinates, and $0$'s everywhere else (i.e. $T_{ij} = E_{ij}+E_{ji}$ when $i\neq j$, otherwise $T_{ii} = E_{ii}$).

We may extending the notion of elimination and duplication matrices to symmetric hypermatrices, which we define as follows:
\begin{defn}
    The \textbf{generalized elimination matrix} $L_d^{(N)}$ is the unique $\binom{d+N-1}{N}\times d^N$ matrix such that 
    \[L_d^{(N)}\mathrm{hvec}(\mathcal{A}) = \mathrm{hvec}_{1/N}(\mathcal{A}),\]
    and the \textbf{generalized duplication matrix} $D_d^{(N)}$ is the unique $d^N\times \binom{d+N-1}{N}$ matrix such that 
    \[D_d^{(N)}\mathrm{hvec}_{1/N}(\mathcal{A})=\mathrm{hvec}(\mathcal{A}),\]
    where $\mathcal{A}$ is any arbitrary symmetric tensor of order $N$ and side length $d$. 
\end{defn}
\begin{proposition}
    Let $N,d\geq 2$ be positive integers. The generalized elimination matrix is given by 
    \[L_d^{(N)} = \sum\limits_{1\leq i_1\leq\dots\leq i_N\leq d}u_{i_1\dots i_N}\mathrm{hvec}(E_{i_1\dots i_N})^T\]
    and the generalized duplication matrix $D_d^{(N)}$ is given by 
    \[D_d^{(N)} = \sum\limits_{1\leq i_1\leq\dots \leq i_N\leq d}\mathrm{hvec}(T_{i_1\dots i_N})u_{i_1\dots i_N}^T,\]
    where $u_{i_1\dots i_N}$ is the unit vector of length $\binom{d+N-1}{N}$ with a $1$ in its $\left(\binom{d+N-1}{N}-\sum\limits_{k=1}^N\binom{d+N-k-i_k}{N-k+1}\right)^{th}$-coordinate and $0$'s elsewhere, $E_{i_1\dots i_N} := e_{i_1}\circ\dots \circ e_{i_N}$, and $T_{i_1\dots i_N} := \sum\limits_{\sigma\in \mathrm{Orb}(i_1,\dots ,i_N)}E_{i_{\sigma(1)}\dots i_{\sigma(N)}}$. 
\end{proposition}
\begin{proof}
    It is well known the number of tuples $(i_1,\dots ,i_N)$ such that $1\leq i_1\leq\dots\leq i_N\leq d$ is equal to $\binom{d+N-1}{N}$, hence it follows that the number of tuples $(i_1,\dots ,i_N)$ such that $k\leq i_1\leq\dots\leq i_N\leq d$ is equal to $\binom{d+N-k}{N}$. Listing all tuples $(i_1,...,i_N)$ such that $1\leq i_1\leq\dots\leq i_N\leq d$ in lexicographical order, $(j_1,\dots ,j_N) > (i_1,\dots, i_N)$ if and only if there exists $k\in [N]$ such that $j_k > i_k$ and $j_l = i_l$ for all $l$ strictly less than $k$; in particular, the last $N-k+1$ entries of $(j_1,...,j_N)$ are subject only to the restriction that $i_k+1 \leq j_k\leq j_{k+1}\leq\dots\leq j_N\leq d$, from which it follows that there are exactly $\sum\limits_{k=1}^N\binom{d+(N-k+1)-(i_k+1)}{N-k+1} = \sum\limits_{k=1}^N\binom{d+N-k-i_k}{N-k+1}$ many tuples greater than $(i_1,\dots ,i_N)$. Thus, the placement of the tuple $(i_1,\dots ,i_N)$, when listed in lexicographical order from least to greatest, is given by $\binom{d+N-1}{N}-\sum\limits_{k=1}^N\binom{d+N-k-i_k}{N-k+1}$. 
    
    Denote $u_{i_1\dots i_N}\in \mathbb{K}^{\binom{d+N-1}{N}}$ to be the unit vector with a $1$ in the placement of $(i_1,\dots,i_N)$ and $0$'s elsewhere. Then for any order $N$ symmetric tensor $\mathcal{A} = [a_{i_1\dots i_N}]\in \mathbb{K}^{d\times\dots \times d}$, 
    \[\mathrm{hvec}_{1/N}(\mathcal{A}) = \sum\limits_{1\leq i_1\leq\dots\leq i_N\leq d}a_{i_1\dots i_N}u_{i_1\dots i_N}.\] Setting $E_{i_1...i_N} := e_{i_1}\circ\dots \circ e_{i_N}$ with each $e_{i_k}$ a standard ordered basis element in $\mathbb{K}^d$, it follows by definition and linearity of Kronecker products that 
    \[\sum\limits_{1\leq i_1\leq\dots\leq i_N\leq d}u_{i_1\dots i_N}\mathrm{hvec}(E_{i_1\dots i_N})^T\mathrm{hvec}(\mathcal{A}) = \mathrm{hvec}_{1/N}(\mathcal{A}).\]
    Similarly, if $T_{i_1\dots i_N} := \sum\limits_{\sigma\in \mathrm{Orb}(i_1,...,i_N)}E_{i_{\sigma(1)}\dots i_{\sigma(N)}}$, summing over the orbit of each equivalence class representative yields 
    \[\sum\limits_{1\leq i_1\leq\dots i_N\leq d}T_{i_1\dots i_N} = \sum\limits_{i_1,\dots i_N=1}^dE_{i_1\dots i_N},\]
    hence by orthogonality and the equation above we have that 
    \begin{align*}
        \sum\limits_{1\leq i_1\leq\dots \leq i_N\leq d}\mathrm{hvec}(T_{i_1\dots i_N})u_{i_1\dots i_N}^T\mathrm{hvec}_{1/N}(\mathcal{A}) &= \sum\limits_{1\leq i_1\leq \dots \leq i_N\leq d}\mathrm{hvec}(T_{i_1...i_N})a_{i_1\dots i_N} \\
        &= \sum\limits_{i_1,\dots i_N=1}^da_{i_1...i_N}\mathrm{hvec}(E_{i_1\dots i_N}) \\
        &= \mathrm{hvec}(\mathcal{A}),
    \end{align*}
    completing the proof. 
\end{proof}

\subsection{Time Complexity Reduction via $D_d^{(N)}$}
Now, let $\mathcal{A}$ be any symmetric order $N$ tensor and side length $d$. Then by \cref{Main Result}, Cayley's first hyperdeterminant of $\mathcal{A}$ is given by 
\begin{align*}
    \left(\frac{1}{d!}\bigotimes\limits_{k=1}^N\varepsilon\right)*(\mathrm{hvec}(\mathcal{A}),\dots,\mathrm{hvec}(\mathcal{A})) &= \left(\frac{1}{d!}\bigotimes\limits_{k=1}^N\varepsilon\right)*\big(D_d^{(N)}\mathrm{hvec}_{1/N}(\mathcal{A}),\dots,D_d^{(N)}\mathrm{hvec}_{1/N}(\mathcal{A})\big) \\
    &= \left(\frac{1}{d!}\bigotimes\limits_{k=1}^N\varepsilon*(D_d^{(N)},\dots,D_d^{(N)})\right)*(\mathrm{hvec}_{1/N}(\mathcal{A}),\dots,\mathrm{hvec}_{1/N}(\mathcal{A})). 
\end{align*}
Denote the product $\frac{1}{d!}\bigotimes\limits_{k=1}^N\varepsilon*\left(D_d^{(N)},\dots,D_d^{(N)}\right)$ as $\mathcal{E}_d^{(N)}$. Note that $\mathcal{E}_d^{(N)}$ is a cubical tensor with order $d$ and side length $\binom{d+N-1}{N}$. Using this notation, it follows that 
\begin{equation}\label{Hyperdeterminant Computation - Symmetric Case}
    \mathrm{hdet}(A) = \mathcal{E}_d^{(N)}*(\mathrm{hvec}_{1/N}(A),\dots,\mathrm{hvec}_{1/N}(A)).
\end{equation}
The matrix $D_d^{(N)}$, and hence also the tensor $\mathcal{E}_d^{(N)}$, depends only on $d$ and $N$. Therefore similarly as before, $\mathcal{E}_d^{(N)}$ may be computed once and reused for all subsequent hyperdeterminant calculations. Assuming $\mathcal{E}_d^{(N)}$ is precomputed and stored, the run time cost of $\mathrm{hdet}$ on symmetrical hypermatrices is $\mathcal{O}\left(\binom{d+N-1}{N}^d\right)$ due to the multilinear matrix product of $\mathcal{E}_d^{(N)}$ with the $d$ copies of $\mathrm{hvec}_{1/N}(\mathcal{A})$ (noting again that asymptotically the initial cost of preparing $\mathrm{hvec}_{1/N}(\mathcal{A})$ is marginal and hence omitted in the cost calculation). Formally, we have the algorithm for computing $\mathrm{hdet}(\mathcal{A})$ for any order $N$ symmetric tensor $\mathcal{A}$ with side length $d$:
\begin{algorithm}[H]
\caption{Fast Calculation of $\mathrm{hdet}$ on Symmetric Hypermatrices}
\begin{algorithmic}[1]\label{Algorithm 1}
\PRECOMPUTE $\mathcal{E}_d^{(N)}\in \mathbb{K}^{\binom{d+N-1}{N}\times\dots\times \binom{d+N-1}{N}}$
\REQUIRE Order $N$ symmetric tensor $\mathcal{A}\in \mathbb{K}^{d\times\dots\times d}$
\ENSURE $\mathrm{hdet}(\mathcal{A})\in \mathbb{K}$
\STATE $\mathbf{a} \leftarrow \mathrm{hvec}_{1/N}(\mathcal{A})$
\STATE $\mathrm{hdet}(\mathcal{A}) \leftarrow \mathcal{E}_d^{(N)}*(\mathbf{a},\dots \mathbf{a})$
\RETURN $\mathrm{hdet}(\mathcal{A})$
\end{algorithmic}
\end{algorithm}
Now, recall that Stirling's approximation says that $n! \sim \sqrt{2\pi n}\left(\frac{n}{e}\right)^n$ for any positive integer $n$ \cite{mermin1984stirling}. Therefore, 
\[\binom{d+N-1}{N} = \frac{(d+N-1)!}{N!(d-1)!}\sim \sqrt{\frac{d+N-1}{2\pi N(d-1)}}\cdot\frac{\left(\frac{d+N-1}{e}\right)^{d+N-1}}{\left(\frac{N}{e}\right)^N\left(\frac{d-1}{e}\right)^{d-1}}.\]
For fixed $d$, we have that $d+N-1\sim N$ and $\sqrt{\frac{d+N-1}{2\pi N(d-1)}}\sim \frac{1}{\sqrt{2\pi(d-1)}}$ as $N\rightarrow \infty$, hence it follows that 
\[\binom{d+N-1}{N}\sim \frac{1}{\sqrt{2\pi(d-1)}}\cdot \left(\frac{e}{N}\right)^N\left(\frac{e}{d-1}\right)^{d-1}\left(\frac{N}{e}\right)^{d+N-1} \propto N^{d-1}.\]
Thus, the complexity of \cref{Algorithm 1} is $\mathcal{O}\left((N^{d-1})^d\right) = \mathcal{O}(N^{d(d-1)})$, which is polynomial in $N$. This is a significant improvement from the current state-of-the art method yielding time complexity $\mathcal{O}(2^{d(N-1)}d^{N-1})$, which is still exponential in $N$ with fixed $d$. For emphasis, we summarize by way of the following theorem. 
\begin{theorem}
    Assuming $\mathcal{E}_d^{(N)}$ is precomputed and stored, the time complexity of \cref{Algorithm 1} $\mathcal{O}(N^{d(d-1)})$, hence polynomial in $N$ for fixed $d$. 
\end{theorem}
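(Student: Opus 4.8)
The plan is to charge the cost of Algorithm~1 line by line using the complexity estimates for the basic hypermatrix operations established in Section~2, and then to collapse the resulting bound to a polynomial in $N$ by estimating the binomial coefficient $\binom{d+N-1}{N}$ for fixed $d$. First I would dispose of the precomputation: by hypothesis $\mathcal{E}_d^{(N)}\in F^{\binom{d+N-1}{N}\times\cdots\times\binom{d+N-1}{N}}$ is already available, so the construction of $\mathcal{E}_d^{(N)}$ (namely the Kronecker product $\bigotimes_{k=1}^N\epsilon$ contracted with $d$ copies of $D_d^{(N)}$, which is comparatively expensive) does not enter the count. What remains are the two executable lines. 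Line~1 forms $\mathrm{hvec}_{1/N}(\mathcal{A})$ by copying the $\binom{d+N-1}{N}$ equivalence-class representatives of the entries of $\mathcal{A}$ into a vector, which costs $\mathcal{O}\!\left(\binom{d+N-1}{N}\right)$, exactly as was observed for $\mathrm{hvec}$ in Section~3.

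Next I would bound Line~2, the multilinear matrix product $\mathcal{E}_d^{(N)}*(\mathrm{hvec}_{1/N}(\mathcal{A}),\dots,\mathrm{hvec}_{1/N}(\mathcal{A}))$ with $d$ arguments. Here I invoke the complexity bound for multilinear matrix multiplication from Section~2: with $n:=\binom{d+N-1}{N}$, contracting the order-$d$ hypermatrix $\mathcal{E}_d^{(N)}$ of side length $n$ against $d$ vectors in $F^{n}$ (viewed as $n\times 1$ matrices) produces a scalar, so all output side lengths equal $1$ and the general estimate $\mathcal{O}\big((m_1\cdots m_d)(n_1\cdots n_d)\big)$ specializes to $\mathcal{O}(1\cdot n^{d})=\mathcal{O}\!\left(\binom{d+N-1}{N}^{d}\right)$. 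This term dominates Line~1, so the total running time of Algorithm~1 is $\mathcal{O}\!\left(\binom{d+N-1}{N}^{d}\right)$.

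The remaining half of the argument is the asymptotic estimate of $\binom{d+N-1}{N}$ for fixed $d$. One clean route is the elementary identity
\[
\binom{d+N-1}{N}=\binom{d+N-1}{d-1}=\frac{(N+1)(N+2)\cdots(N+d-1)}{(d-1)!},
\]
which exhibits $\binom{d+N-1}{N}$ as a polynomial in $N$ of degree exactly $d-1$ with leading coefficient $\tfrac{1}{(d-1)!}$, whence $\binom{d+N-1}{N}=\Theta(N^{d-1})$ as $N\to\infty$; alternatively one applies Stirling's approximation $n!\sim\sqrt{2\pi n}\,(n/e)^n$ to each factorial in $\frac{(d+N-1)!}{N!(d-1)!}$ and simplifies using $d+N-1\sim N$ and $\sqrt{\tfrac{d+N-1}{2\pi N(d-1)}}\sim\tfrac{1}{\sqrt{2\pi(d-1)}}$, again arriving at $\binom{d+N-1}{N}\propto N^{d-1}$. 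Substituting this into the bound above yields a running time of $\mathcal{O}\!\left((N^{d-1})^{d}\right)=\mathcal{O}(N^{d(d-1)})$, which is polynomial in $N$ for fixed $d$, as claimed.

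As for where care is needed: there is no genuine mathematical obstacle, but the bookkeeping must be handled precisely at two points. First, one must correctly instantiate the Section~2 multilinear-multiplication bound — in particular that the relevant hypermatrix is $\mathcal{E}_d^{(N)}$ (order $d$, side length $\binom{d+N-1}{N}$), that there are $d$ vector arguments, and that the output is a scalar so every $m_i=1$; mis-identifying any of these changes the final exponent. Second, one should ensure the binomial estimate is two-sided ($\Theta$, not merely $O$) if one wishes to assert the exponent $d(d-1)$ is tight, and that the hidden $\mathcal{O}$-constant absorbs the $\tfrac{1}{d!}$ prefactor and the lower-order terms of the degree-$(d-1)$ polynomial — both harmless precisely because $d$ is treated as fixed.
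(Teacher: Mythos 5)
Your proposal is correct and follows essentially the same route as the paper: charge the precomputation to the hypothesis, bound the multilinear product of $\mathcal{E}_d^{(N)}$ (order $d$, side length $\binom{d+N-1}{N}$) with $d$ vector arguments by $\mathcal{O}\!\left(\binom{d+N-1}{N}^{d}\right)$ via the general multilinear-multiplication estimate, and then use $\binom{d+N-1}{N}=\Theta(N^{d-1})$ for fixed $d$ to conclude $\mathcal{O}(N^{d(d-1)})$. The only cosmetic difference is that the paper derives the binomial asymptotics from Stirling's approximation, whereas your primary route, viewing $\binom{d+N-1}{d-1}$ as a degree-$(d-1)$ polynomial in $N$, reaches the same estimate more elementarily.
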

The table below provides a clear comparison/summary of the different methods for computing $\mathrm{hdet}$:
\begin{table}[H]
\centering
\begin{tabular}{ccc}
\hline
Method & General Case & Symmetric Case \\
\hline
Prior State-of-the-art \cite{amanov2024enhanced} & $\mathcal{O}(2^{d(N-1)}d^{N-1})$ & $\mathcal{O}(2^{d(N-1)}d^{N-1})$ \\
Naive Application of \cref{Main Result} & $\mathcal{O}(d^{dN})$ & $\mathcal{O}(d^{dN})$ \\
\cref{Algorithm 1} ($\mathcal{E}_d^{(N)}$ precomputed) & \text{N/A} & $\mathcal{O}(N^{d(d-1)})$ \\
\hline
\end{tabular}
\caption{Time complexity comparison for computing Cayley's first hyperdeterminant}
\label{tab:complexity}
\end{table}
The space complexity also needs to be accounted for since we are assuming $\mathcal{E}_d^{(N)}$ is precomputed and stored. Since $\mathcal{E}_d^{(N)}$ is a cubical tensor of order $d$ with side length $\binom{d+N-1}{N}$, its memory cost also grows $\mathcal{O}\left(N^{d(d-1)}\right)$, assuming $d$ is fixed. Note also that $\bigotimes\limits_{k=1}^N\varepsilon$ is a very sparse tensor, and so $\mathcal{E}_d^{(N)}$ is also likely to be quite sparse, in which the case the memory costs may potentially be reduced (possibly significantly). For the sake of brevity we will not consider such potential reductions in space complexity, but we encourage interested readers to investigate this. 

\subsection{Application to Bosonic Quantum Systems}
One of the most elusive phenomena in quantum physics, and at the same time one of the most powerful resources in quantum information, is quantum entanglement. Entangled quantum states arise in composite quantum systems, whose state spaces are tensor products of Hilbert spaces. One of the problems considered in quantum physics and quantum information theory is the quantification of a state's entanglement. In general, this is a complicated endeavor because quantum states in a tensor product of 3 or more Hilbert spaces can be entangled in different, physically inequivalent ways \cite{dur2000three}. Nonetheless, it is widely agreed throughout the literature (a few authoritative sources affirming this include \cite{bennett1996mixed,vedral1997quantifying,vidal2000entanglement,eltschka2014quantifying,bengtsson2017geometry}) that any proposed entanglement measure $E$ should at a bare minimum satisfy the following criteria: 
\begin{enumerate}
    \item $E$ vanishes on separable states,
    \item $E$ is invariant under local unitary transformations, and 
    \item $E$ is non-increasing on average under local operations and classical communications (LOCC). 
\end{enumerate}
The three properties listed above are sometimes referred to as the entanglement axioms. 

Qubits are the simplest type of quantum systems and the fundamental building blocks of quantum computation. Qudits are higher-dimensional analogues of qubits. In particular, for any positive integers $n,d\geq 2$, the state of an $n$-qudit ($n$-qubit when $d=2$) is a unit vector $\ket{\psi}\in(\mathbb{C}^d)^{\otimes n}$, which can be written as a sum
\[\ket{\psi} = \sum\limits_{i_1,\dots,i_n=0}^{d-1}\psi_{i_1\dots i_n}\ket{i_1\dots i_n}\]
such that 
\[\sum\limits_{i_1,\dots,i_n=0}^{d-1}|\psi_{i_1\dots i_n}|^2=1,\]
where $\ket{i_1\dots i_n}$ denotes the Kronecker product $\ket{i_1}\otimes\dots \otimes \ket{i_n}$ and $\ket{i_k}$ denotes the $(i_k+1)^{th}$ standard ordered basis element in $\mathbb{C}^d$. Replacing the Kronecker product $\otimes$ with the outer product $\circ$ yields a cubical tensor of order $n$ with side length $d$: 
\[\sum\limits_{i_1,\dots,i_n=0}^{d-1}\psi_{i_1\dots i_n}\ket{i_1}\circ\dots\circ \ket{i_n} =: \widehat{\psi},\]
which has Frobenius norm $1$. In particular, there is a bijection between the space of $n$-qudit states and the space of cubical hypermatrices of order $n$ with side length $d$ and Frobenius norm $1$ given by the map 
\[\ket{\psi} \mapsto \widehat{\psi}.\]

The concurrence is a popular entanglement measure on $2n$-qubit states that measures the $2n$-way entanglement involving all qubits \cite{coffman2000distributed,wong2001potential}. In \cite{dobes2024qubits}, Dobes \& Jing prove that on pure states of $2n$-qubits, the concurrence $C$ satisfies the equation
\[C(\ket{\psi}) = 2|\mathrm{hdet}(\widehat{\psi})|,\]
and in \cite{dobes2025cayley} they show that more generally on pure states of $2n$-qudits, $|\mathrm{hdet}|^{2/d}$ satisfies the entanglement measure axioms. Thus, Cayley's first hyperdeterminant may be considered a physically meaningful generalization of the concurrence. 

A special class of $n$-qudit states of particular interest to physicists are those found in bosonic quantum systems (see \cite{marconi2025symmetric} for a nice overview). Mathematically, the state space of $n$ indistinguishable bosons is the Hilbert space $\mathrm{Sym}^n(\mathbb{C}^d)$, and so states $\ket{\psi}$ in bosonic quantum systems satisfy the symmetry condition:
\[\ket{\psi} = P_{\pi}\cdot \ket{\psi}\qquad \forall \pi\in S_n,\]
where 
\[P_{\pi}\cdot \ket{\psi} := \sum\limits_{i_1,\dots,i_n=0}^{d-1}\psi_{i_{\pi^{-1}(1)},\dots,i_{\pi^{-1}(n)}}\ket{i_{\pi^{-1}(1)}\dots i_{\pi^{-1}(n)}}.\]
The bijection $\ket{\psi}\mapsto \widehat{\psi}$ implies that $\ket{\psi}$ satisfies the symmetry condition if and only if $\widehat{\psi}$ is symmetric. Therefore, to calculate the $2n$-way entanglement of a symmetric $2n$-qudit state reduces to calculating the hyperdeterminant of its corresponding tensor. \cref{Algorithm 1} therefore provides an efficient way for one to calculate the $2n$-way entanglement of $2n$-qudit states of bosonic quantum systems. 

\section{Conclusion/Future Work}
In summary, in this article we have derived a new formula for computing Cayley's first hyperdeterminant $\mathrm{hdet}$ in terms of a multilinear matrix product involving the Levi-Civita symbol and the canonical vectorization operator. For symmetric tensors of order $N$ with side length $d$, we may consider their $1/N$-hypermatrix vectorization $\mathrm{hvec}_{1/N}$, which contains the same amount of information as their vectorization but requires only $\binom{d+N-1}{N}$ entries rather than $d^N$. This therefore substantially reduces the time complexity for computing $\mathrm{hdet}$ in terms of $N$ assuming $d$ fixed. An important application for this is the efficient calculation of the $2n$-way entanglement for $2n$-qudit states of bosonic quantum systems. 

Besides exploiting the sparsity of $\mathcal{E}_d^{(N)}$ to reduce memory costs and the burden of precomputation, another (more interesting) potential direction for future research could be to consider partially symmetric tensors and derive subgroup analogues of generalized duplication matrices. That is, for subgroups $G\leq S_N$ and \textbf{$\boldsymbol{G}$-symmetric} tensors satisfying the property: 
\[a_{i_{\sigma(1)}\dots i_{\sigma(N)}} = a_{i_1\dots i_N}\qquad \forall \sigma\in G,\]
define $G$-analogues of $\mathrm{hvec}_{1/N}$ and $D_d^{(N)}$. With such objects, one could then factor $\mathrm{hdet}$ on partially symmetric tensors in a similar manner to equation \eqref{Hyperdeterminant Computation - Symmetric Case}, and then for fixed $d$ and precomputed $G$-analogue of $\mathcal{E}_d^{(N)}$, consider the run time cost as $N\rightarrow \infty$. For partially symmetric tensors, the run time cost will be atleast as costly as in the fully symmetric case, but is it sill polynomial for certain proper subgroups (indeed it trivially is for the alternating subgroup $A_N$, but this does not hold in general for all proper subgroups), and which subgroups yield quasi-polynomial or sub-exponential run-time costs? This seems like an interesting problem at the intersection of group theory and complexity theory. Additionally, $G$-analogues of generalized duplication matrices likely have a rich algebraic structure worth further investigating. \\
\quad \\
\noindent{\bf Data availability statement}
Any data that support the findings of this study are included within the article. \\
\\
\noindent{\bf Conflict of interest statement}
The author has no conflicts of interest.

\bibliographystyle{ieeetr}
\bibliography{References}

@book{cayley1844theory,
  title={On the theory of determinants},
  author={Cayley, Arthur},
  year={1844},
  publisher={Pitt Press}
}

@article{zappa1997cayley,
  title={The Cayley determinant of the determinant tensor and the Alon--Tarsi conjecture},
  author={Zappa, Paolo},
  journal={Advances in Applied Mathematics},
  volume={19},
  number={1},
  pages={31--44},
  year={1997},
  publisher={Elsevier}
}

@article{luque2003hankel,
  title={Hankel hyperdeterminants and Selberg integrals},
  author={Luque, Jean-Gabriel and Thibon, Jean-Yves},
  journal={Journal of Physics A: mathematical and general},
  volume={36},
  number={19},
  pages={5267},
  year={2003},
  publisher={IOP Publishing}
}

@article{matsumoto2008hyperdeterminantal,
  title={Hyperdeterminantal expressions for Jack functions of rectangular shapes},
  author={Matsumoto, Sho},
  journal={Journal of Algebra},
  volume={320},
  number={2},
  pages={612--632},
  year={2008},
  publisher={Elsevier}
}

@article{lammers2021generalisation,
  title={A generalisation of the honeycomb dimer model to higher dimensions},
  author={Lammers, Piet},
  journal={The Annals of Probability},
  volume={49},
  number={2},
  pages={1033--1066},
  year={2021},
  publisher={JSTOR}
}

@article{amanov2023tensor,
  title={Tensor slice rank and Cayley's first hyperdeterminant},
  author={Amanov, Alimzhan and Yeliussizov, Damir},
  journal={Linear Algebra and its Applications},
  volume={656},
  pages={224--246},
  year={2023},
  publisher={Elsevier}
}

@article{dobes2024qubits,
  title={Qubits as hypermatrices and entanglement},
  author={Dobes, Isaac and Jing, Naihuan},
  journal={Physica Scripta},
  volume={99},
  number={5},
  pages={055110},
  year={2024},
  publisher={IOP Publishing}
}

@article{dobes2025cayley,
  title={Cayley's First Hyperdeterminant is an Entanglement Measure},
  author={Dobes, Isaac and Jing, Naihuan},
  journal={arXiv preprint arXiv:2504.15511},
  year={2025}
}

@article{hillar2013most,
  title={Most tensor problems are NP-hard},
  author={Hillar, Christopher J and Lim, Lek-Heng},
  journal={Journal of the ACM (JACM)},
  volume={60},
  number={6},
  pages={1--39},
  year={2013},
  publisher={ACM New York, NY, USA}
}

@article{amanov2024enhanced,
  title={Enhanced algorithm for computing cayley’s first hyperdeterminant},
  author={Amanov, Alimzhan},
  journal={Herald of the Kazakh-British technical university},
  volume={21},
  number={3},
  pages={58--65},
  year={2024}
}

@article{lim2013tensors,
  title={Tensors and hypermatrices},
  author={Lim, Lek-Heng},
  journal={Handbook of linear algebra},
  volume={2},
  year={2013},
  publisher={CRC Press Boca Raton, FL}
}

@article{pickard2024kronecker,
  title={Kronecker product of tensors and hypergraphs: structure and dynamics},
  author={Pickard, Joshua and Chen, Can and Stansbury, Cooper and Surana, Amit and Bloch, Anthony M and Rajapakse, Indika},
  journal={SIAM Journal on Matrix Analysis and Applications},
  volume={45},
  number={3},
  pages={1621--1642},
  year={2024},
  publisher={SIAM}
}

@book{gilmore2008lie,
  title={Lie groups, physics, and geometry: an introduction for physicists, engineers and chemists},
  author={Gilmore, Robert},
  year={2008},
  publisher={Cambridge University Press}
}

@article{magnus1980elimination,
  title={The elimination matrix: some lemmas and applications},
  author={Magnus, Jan R and Neudecker, Heinz},
  journal={SIAM Journal on Algebraic Discrete Methods},
  volume={1},
  number={4},
  pages={422--449},
  year={1980},
  publisher={SIAM}
}

@article{mermin1984stirling,
  title={Stirling’s formula},
  author={Mermin, D},
  journal={American J. Physics},
  volume={52},
  pages={362--365},
  year={1984}
}

@article{dur2000three,
  title={Three qubits can be entangled in two inequivalent ways},
  author={D{\"u}r, Wolfgang and Vidal, Guifre and Cirac, J Ignacio},
  journal={Physical Review A},
  volume={62},
  number={6},
  pages={062314},
  year={2000},
  publisher={APS}
}

@article{bennett1996mixed,
  title={Mixed-state entanglement and quantum error correction},
  author={Bennett, Charles H and DiVincenzo, David P and Smolin, John A and Wootters, William K},
  journal={Physical Review A},
  volume={54},
  number={5},
  pages={3824},
  year={1996},
  publisher={APS}
}

@article{vedral1997quantifying,
  title={Quantifying entanglement},
  author={Vedral, Vlatko and Plenio, Martin B and Rippin, Michael A and Knight, Peter L},
  journal={Physical Review Letters},
  volume={78},
  number={12},
  pages={2275},
  year={1997},
  publisher={APS}
}

@article{vidal2000entanglement,
  title={Entanglement monotones},
  author={Vidal, Guifr{\'e}},
  journal={Journal of Modern Optics},
  volume={47},
  number={2-3},
  pages={355--376},
  year={2000},
  publisher={Taylor \& Francis}
}

@article{eltschka2014quantifying,
  title={Quantifying entanglement resources},
  author={Eltschka, Christopher and Siewert, Jens},
  journal={Journal of Physics A: Mathematical and Theoretical},
  volume={47},
  number={42},
  pages={424005},
  year={2014},
  publisher={IOP Publishing}
}

@book{bengtsson2017geometry,
  title={Geometry of quantum states: an introduction to quantum entanglement},
  author={Bengtsson, Ingemar and {\.Z}yczkowski, Karol},
  year={2017},
  publisher={Cambridge university press}
}

@article{coffman2000distributed,
  title={Distributed entanglement},
  author={Coffman, Valerie and Kundu, Joydip and Wootters, William K},
  journal={Physical Review A},
  volume={61},
  number={5},
  pages={052306},
  year={2000},
  publisher={APS}
}

@article{wong2001potential,
  title={Potential multiparticle entanglement measure},
  author={Wong, Alexander and Christensen, Nelson},
  journal={Physical Review A},
  volume={63},
  number={4},
  pages={044301},
  year={2001},
  publisher={APS}
}

@article{marconi2025symmetric,
  title={Symmetric quantum states: a review of recent progress},
  author={Marconi, Carlo and M{\"u}ller-Rigat, Guillem and Romero-Pallej{\`a}, Jordi and Tura Brugu{\'e}s, Jordi and Sanpera, Anna},
  journal={Reports on Progress in Physics},
  year={2025}
}

\end{document}